\documentclass[11pt]{article}
\usepackage{fullpage}
\usepackage[linesnumbered,ruled]{algorithm2e}
\usepackage{empheq} 
\usepackage{cases} 

\usepackage{amssymb}
\usepackage{amsthm}
\usepackage{amsmath}
\usepackage{graphicx}
\newtheorem{theorem}{Theorem}[section]
\newtheorem{lemma}{Lemma}[section]

\newtheorem{claim}{Claim}[section]
\newtheorem{proposition}{Proposition}[section]
\newtheorem{definition}{Definition}[section]

\newcommand{\cA}{{\cal A}}
\newcommand{\cB}{{\cal B}}
\newcommand{\cD}{{\cal D}}
\newcommand{\cR}{{\cal R}}
\newcommand{\cS}{{\cal S}}
\newcommand{\cT}{{\cal T}}
\DeclareMathOperator{\Penalty}{\texttt{Penalty}}
\newcommand{\overhead}{\mathcal{O}}
\newcommand{\KnowType}{\mathcal{T}}
\DeclareMathOperator{\dist}{yes}
\DeclareMathOperator{\nodist}{no}

\DeclareMathOperator{\blind}{blind}
\DeclareMathOperator{\comp}{comp}

\newcommand{\KTdb}{\KnowType_{\blind}^{\dist}}
\newcommand{\KTnb}{\KnowType_{\blind}^{\nodist}}
\newcommand{\KTdc}{\KnowType_{\comp}^{\dist}}
\newcommand{\KTnc}{\KnowType_{\comp}^{\nodist}}

\begin{document}
\bibliographystyle{plain}
\title{{\bf Impact of Knowledge on the Cost\\ of Treasure Hunt in Trees }}
\author{
S\'{e}bastien Bouchard\footnotemark[1]
\and
Arnaud Labourel\footnotemark[2]
\and
Andrzej Pelc \footnotemark[3] \footnotemark[4]
}
\date{ }
\maketitle
\def\thefootnote{\fnsymbol{footnote}}

\footnotetext[1]{ 
\noindent Univ. Bordeaux, Bordeaux INP, CNRS, LaBRI, UMR5800, Talence, France.\\
E-mail: {\tt sebastien.bouchard@u-bordeaux.fr}
}
\footnotetext[2]{Aix Marseille Univ, Universit\'e de Toulon, CNRS, LIS, Marseille, France.\\
E-mail: {\tt arnaud.labourel@lis-lab.fr}}
\footnotetext[3]{ \noindent
D\'epartement d'informatique, Universit\'e du Qu\'ebec en Outaouais, Gatineau,
Qu\'ebec J8X 3X7, Canada.\\
E-mail: {\tt pelc@uqo.ca}
}
\footnotetext[4]{ \noindent
Partially supported by NSERC discovery grant 2018-03899 and
by the Research Chair in Distributed Computing at the
Universit\'e du Qu\'{e}bec en Outaouais.
}

\begin{abstract}
		A mobile agent has to find an inert target in some environment that can be a graph or a terrain in the plane. This task is known as {\em treasure hunt}. We consider deterministic algorithms for treasure hunt in trees. Our goal is to establish the impact of different kinds of initial knowledge given to the agent on the cost of treasure hunt, defined as the total number of edge traversals until the agent reaches the treasure hidden in some node of the tree. The agent can be initially given either a {\em complete map} of the tree rooted at its starting node, with all port numbers marked, or a {\em blind map} of the tree rooted at its starting node but without port numbers. It may also be given, or not, the distance from the root to the treasure. This yields four different {\em knowledge types} that are partially ordered by their precision. (For example knowing the blind map and the distance is less precise than knowing the complete map and the distance). The {\em penalty} of a less precise knowledge type ${\cal T}_2$ over a more precise knowledge type ${\cal T}_1$  measures intuitively the worst-case ratio of the cost of an algorithm supplied with knowledge of type ${\cal T}_2$ over  the cost of an algorithm supplied with knowledge of type ${\cal T}_1$. Our main results establish penalties for comparable knowledge types in this partial order. For knowledge types with known distance, the penalty for having a blind map over a complete map turns out to be very large. By contrast, for unknown distance, the penalty of having a blind map over having a complete map is small. When a map is provided (either complete or blind), the penalty of not knowing the distance over knowing it is medium.

\noindent {\bf keywords:} mobile agent, treasure hunt, tree

\end{abstract}

	\section{Introduction}
	
	\subsection{The background} 
	
	Treasure hunt is the task of finding an inert target by a mobile agent in some environment that can be a graph or a terrain in the plane. This task has important applications when the terrain is dangerous or difficult to access for humans. Rescuing operations in mines contaminated or submerged by water are an example of situations where a lost miner is a target that has to be found fast by a mobile robot, and hence the length of the robot's trajectory should be as short as possible. In this case, the graph models the corridors of the mine with crossings representing nodes. Another example is searching for a data item in a communication network modeled by a graph.

			We consider deterministic algorithms for treasure hunt in trees. Apart from the above examples, trees can be an abstract model of a search space in task planing that requires finding search strategies operating within bounded memory \cite{DCD}, where the cost of backtracking is non-negligible. The goal is to find the solution represented as a node in the tree of possibilities and to do it as fast as possible.

\subsection{\bf The problem}
			
			 We will use the term {\em tree} for a tree rooted at the initial node of the agent, with port numbers at each node of degree $\delta$ arbitrarily numbered $0,\dots, \delta -1$. The treasure is hidden in some node of the tree and the agent has to find it. The agent is initially given some knowledge. It can be given either a {\em complete map}, with the root and all port numbers marked, or a {\em blind map} of the tree, with the root marked but without port numbers. It may also be given, or not, the distance from the root to the treasure. 
\footnote{Predefined port numbers are essential only if a complete map is given to the agent. In other cases the agent can assign its own port numbers as the tree is processed.}			 
			 
			 This yields four different {\em knowledge types} that are partially ordered by their precision. (For example knowing the blind map and the distance is less precise than knowing the complete map and the distance). These knowledge types will be denoted $\KTdc$, $\KTdb$, $\KTnc$, $\KTnb$, where the superscript indicates whether the agent is given the distance from the root to the treasure, and the subscript indicates the type of map given to the agent.

			Formally, any knowledge type $\cT$ is a family of sets $K$, such that each $K$, called {\em knowledge}, is a set of couples $(T,d)$ called {\em instances}, where $T$ is a tree and $d$ is a positive integer not larger than its depth. Each set $K$ is a possible input that an agent executing a treasure hunt algorithm can be given as initial knowledge. Thus, by getting knowledge $K$, the agent is given the set of possible instances that it may encounter. Suppose that the agent operates in some tree $T^*$ and the treasure is at distance $d^*$ from the root of this tree.  If $K\in \KTdc$, i.e., the algorithm is given a complete map and the distance, then $K$ is the one-element set $\{(T^*,d^*)\}$. If $K\in \KTdb$, i.e., the algorithm is given a blind map and the distance, then $K$ is the set of couples $(T,d^*)$, where $T$ are all possible trees resulting from $T^*$ by reassigning arbitrary port numbers at each node (this corresponds to giving the agent the blind map of the tree).  If $K\in \KTnc$,  i.e., the algorithm is given only a complete map, then $K$ is the set of all couples $(T^*,d)$, where $d$ are positive integers at most the depth of $T^*$, because these are the possible distances from the root to the treasure. Finally, if $K\in \KTnb$,  i.e., the algorithm is given only a blind map, then $K$ is the set of all couples $(T,d)$, where $T$ are all possible trees resulting from $T^*$ by reassigning arbitrary port numbers at each node, and $d$ are positive integers at most the depth of $T^*$.

			At each step of its execution, a deterministic algorithm for the agent has some history. At the beginning, the history is just a set $K$ given to the agent as initial knowledge. At each step, the agent sees all the port numbers at the current node. On the basis of its current history, the agent chooses one of these port numbers, traverses the corresponding edge and learns the port number by which it enters the adjacent node and learns the degree of this node (it sees all the port numbers at this node). It adds this information to its history and on the basis of it makes the next step. This proceeds until the agent finds the treasure.

			For any treasure hunt algorithm $\cA$ that is given initial knowledge $K$, and for any instance $(T, d)\in K$, we define the worst-case cost $C_{\cA}^K(T, d)$ of $\cA$ on $(T,d)$ as the total number of moves performed by $\cA$ (including possible revisits of nodes) until all nodes at distance exactly $d$ from the root have been visited. This value $C_{\cA}^K(T, d)$ is the worst-case cost of algorithm $\cA$ for finding the treasure at distance $d$ in the tree $T$, since the treasure can always be hidden in the last visited node at this distance.

			The quality of a treasure hunt algorithm $\cA$ is measured by comparing its cost to the distance between the root and the treasure. This distance is equal to the cost of the optimal algorithm having a complete map with the exact position of the treasure. This measure is done in the following way. For any integer $m\geq 1$ and for any knowledge $K$, let $B_m(K)=\{(T, d) \in K \mid d \leq m\}$ be the set of instances of $K$ for which the treasure is at a distance at most $m$. For any knowledge $K$ given to algorithm $\cA$, the ratio between $C_{\cA}^K(T, d)$ and $d$, maximized over all possible instances $(T, d)$ in $B_m(K)$ is called the \emph{overhead} $\overhead_{\cA}^K(m)$ of algorithm $\cA$ for knowledge $K$, with radius $m$, i.e.,

			\[\overhead_{\cA}^K(m) =
			\left\{
			\begin{array}{ll}
			\max\limits_{(T, d)\in B_m(K)}\dfrac{C_{\cA}^K(T, d)}{d}& \mbox{ if $B_m(K)\neq \emptyset$,}\\
			\\
			0 &\mbox{ otherwise.}
			\end{array}
			\right.
			\]

			Note that the above definition of overhead is more fine-grained than the classic competitive ratio approach, see, e.g.,  \cite{La}. In the spirit of the latter we would define the overhead of algorithm $\cA$  given knowledge $K$, maximizing the ratio over {\em all} possible instances in $K$, i.e.,  as $\cR^K_{\cA}=\max\limits_{(T, d)\in K}\frac{C_{\cA}^K(T, d)}{d}$. Instead, we define it for every radius $m$ separately. This approach is more precise, as some algorithms that seem to be quite good according to the traditional approach, turn out to be very inefficient for many values of $m$. As an example, consider the doubling strategy $\cD$ used in \cite{FKK+2008} on full binary trees. This strategy consists of iterative executions of DFS from the root to depths $2^i$, for $i=1,2,\dots$. It follows from \cite{FKK+2008} that, if $K$ is the knowledge consisting of all instances $(T,d)$ for which $T$ is a full binary tree of depth $h$ and $d$ is any positive integer at most $h$, then $\cR^K_{\cD}$ is at most 4 times larger than $\cR^K_{\cB}$ for the algorithm $\cB$ with the best possible $\cR^K_{\cB}$. However, this doubling strategy is really very bad, e.g., for radii $m$ of the form $2^k+1$. Indeed, for any such $m$, the nodes at distance $m$ from the root will only be visited using the doubling strategy when the agent performs the DFS at depth $2m-2=2^{k+1}$. In order to visit all nodes at distance $m$ with a DFS at depth $2m-2$, the agent needs to visit more than half of the nodes at distance at most $2m-2$ from the root and so $\overhead^K_{\cD}(m)\geq 2^{2m-2}/m$ (see Figure~\ref{fig_doublingstrategy}). For the algorithm $\cA$ consisting of iterative executions of DFS at $i=1,2,\dots$, we have $\overhead^K_{\cA}(m)\leq\sum_{i=1}^{m}2(2^{i+1}-2)/m$ since the subtree of depth $i$ contains $2^{i+1}-2$ edges and a DFS at depth $i$ traverses each of these edges twice. It follows that $\overhead^K_{\cA}(m)\leq\sum_{i=1}^{m}(2^{i+2}-4)/m\leq 2^{m+3}/m$ and so $\overhead^K_{\cD}(m) \geq 2^{m-5} \cdot \overhead^K_{\cA}(m)$. Thus for arbitrarily large radii $m$, the doubling strategy on trees is exponentially worse (according to our fine-grained definition of overhead) than the best possible algorithm. Our comparison between our notion of overhead and competitive ratio is also valid for the notion of search ratio defined in \cite{FKK+2008}. The main difference between the search ratio and competitive ratio is that the former is defined for a given fixed environment whereas in the latter the graph is not fixed and the agent has to learn it. Appart from this, these two definitions are similar. The doubling strategy is 4-search-competitive on trees\cite{FKK+2008} and so deemed good whereas with our fine-grained approach it is bad. In our considerations, in order to compare algorithms $\cA$ and $\cA'$ we compare $\overhead_{\cA}^K(m)$ to $\overhead_{\cA'}^K(m)$ for all $m$, and intuitively consider an algorithm $\cA$ to be better than $\cA'$ if it is better for all $m$.

			\begin{figure}
				\centering
				\includegraphics[width=14cm]{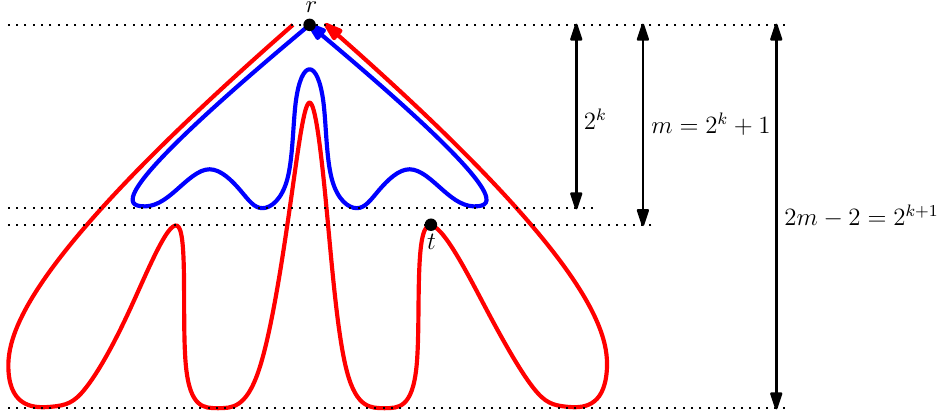}
				\caption{Illustration of a case for which the doubling strategy is bad since the treasure $t$ is only found after at least half a DFS at depth $2m-2$ is performed.\label{fig_doublingstrategy}}
			\end{figure}

			Let $T$ be a tree. We denote by $h(T)$ the depth of $T$, i.e., the length of the longest simple path from the root to a leaf of $T$. The $i$th level of the tree $T$ is the set of nodes at distance $i$ from the root. We define $l_d(T)$ as the number of nodes at the $d$th level, and $L_{d_1}^{d_2}(T) =\sum_{i=d_1}^{d_2}l_i(T)$ as the number of nodes at all levels from $d_1$ to $d_2$. For the sake of brevity, when there is no ambiguity on the tree, we will only write $l_d$ and $L_{d_1}^{d_2}$ for those values. Observe that these values can be computed from any map (complete or blind) of the tree $T$.

			There is a natural partial order $\ll$ of knowledge types, corresponding to  the decreasing precision of the initial knowledge given to the agent. For any pair of knowledge types $\cT_1$ and $\cT_2$, we have $\cT_1\ll\cT_2$ if for any $K_1\in \cT_1$ there exists $K_2\in \cT_2$ such that $K_1\subseteq K_2$. This means that knowledge $K_2$ of type $\cT_2$  is less precise than knowledge $K_1$ of type $\cT_1$.

			Our goal is to measure the impact of the quality of the initial knowledge given to treasure hunt algorithms on their overhead. To this end, we define the notion of penalty between two knowledge types $\KnowType_1\ll\KnowType_2$.

			\begin{definition}\label{def}
				Let $\KnowType_1\ll\KnowType_2$ be two knowledge types. For any function $f:\mathbb{N}\rightarrow\mathbb{N}$, we say that the penalty of $\KnowType_2$ over $\KnowType_1$, denoted $\Penalty(\KnowType_2/\KnowType_1)$, is $O(f)$ if there exists a constant $c>0$ and a deterministic algorithm $\cA_2$ such that for any deterministic algorithm $\cA_1$, we have:
				$$\forall K_2\in \KnowType_2,\forall m\geq 1,  \overhead_{\cA_2}^{K_2}(m) \leq c \cdot f(m) \cdot \max_{\substack{K_1\in \KnowType_1\\ K_1\subseteq K_2}}\left(\overhead_{\cA_1}^{K_1}(m)\right).$$
			\end{definition}

			The above definition captures our intention to measure the impact of knowledge on the performance of treasure hunt algorithms in the worst case up to radius $m$, for all positive integers $m$, and disregarding multiplicative constants. In fact, the worst-case aspect appears three times. First, it is already present in the definition of $C_{\cA}^K(T, d)$, for a given instance $(T,d)$, as this cost is defined as the cost of the algorithm for the worst-case location of the treasure at level $d$ of the tree $T$. Second, it appears in the definition of the overhead of an algorithm $\cA$ for given knowledge $K$ and any $m$, which is the worst-case ratio $C_{\cA}^K(T, d)/d$, over all instances $(T,d)$ in $B_m(K)$. Finally, the worst-case approach appears in the definition of penalty itself, as we compare the overhead of $\cA_2$, for any knowledge $K_2\in \cT_2$ and any $m$, to the overhead of any algorithm using a worst-case way of shrinking the set of instances in knowledge $K_2$ to get some knowledge of type $\cT_1$.

			For example, proving that the penalty of knowledge type $\KTnb$ over knowledge type $\KTnc$ is $O(1)$ means that there exists some constant $c>0$ and an algorithm that gets only a blind map of $T$, whose overhead is only $c$ times larger than the maximum overhead of an algorithm given a complete map resulting from this blind map by some port numbering, for any positive integer $m$.

			Hence the fact that the penalty of some knowledge type over a more precise knowledge type is small (resp. large) means that the corresponding difference of knowledge quality has a small (resp. large) impact on the overhead of treasure hunt algorithms.

			Notice that for three knowledge types $\cT_1 \ll \cT_2$ and $\cT_2 \ll \cT_3$, it may well happen that the penalty of $\cT_3$ over $\cT_1$ is smaller than the penalty of $\cT_2$ over $\cT_1$. This is because for the penalty of $\cT_2$ over $\cT_1$, the maximum overhead on the right-hand side of the inequality in Definition \ref{def} is taken over a smaller family of possible sets $K_1$, and hence it can be smaller than the maximum overhead taken for the penalty of $\cT_3$ over $\cT_1$. Actually, it turns out that such a situation occurs for knowledge types $\KTdc$, $\KTdb$ and $\KTnb$.

\subsection{Our results} 
			
			Our main results establish penalties for comparable knowledge types in the partial order $\ll$. For knowledge types with known distance, the penalty for having a blind map over a complete map turns out to be arbitrarily large: more precisely, it cannot be bounded by any function of the radius $m$. By contrast, for unknown distance, the penalty of having a blind map over having a complete map is small: it is $O(1)$. When a map is provided (either complete or blind), the penalty of not knowing the distance over knowing it is medium: we prove that it is linear in the radius $m$, and that this cannot be improved, i.e., it cannot be bounded by any sublinear function of this radius. The same  remains true for the penalty between the extreme types: $\KTnb$ over $\KTdc$. Figure \ref{fig_kt_order} depicts these results.

			It is interesting to note that non-constant penalties between various knowledge types are in fact due to the presence of leaves at different levels.
This is implied by the observation that if we restrict attention to the class of trees where all leaves are at the same (last) level, then all penalties
between comparable knowledge types become $O(1)$ (see the justification in Section~\ref{ccl}).

			\begin{figure}
				\centering
				\includegraphics[width=8cm]{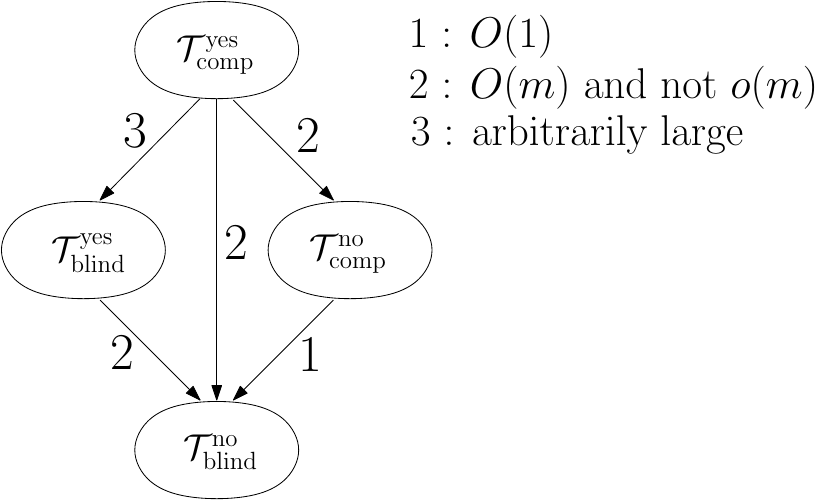}
				\caption{Penalties between different knowledge types\label{fig_kt_order}}
			\end{figure}

\subsection{Related work} 

			The problem of searching for a target by one or more mobile agents (often called treasure hunt) has a long history dating back to the sixties and seventies \cite{Bec1964,BN1970,Bel1963}. In these papers, the authors considered treasure hunt strategies for random locations of the treasure on the line. The book \cite{AG} surveys both the search for a fixed target and the related rendezvous problem, where the target and the searching agent are both mobile and their role is symmetric: they cooperate to meet. This book is concerned mostly with randomized search strategies. In \cite{MP,TSZ}, the authors studied relations between the problems of treasure hunt (searching for a fixed target) and rendezvous in graphs. In  \cite{CDGKSU}, the authors considered the task of searching for a non-adversarial, uncooperative mobile agent moving with constant speed on a cycle, by a team of cooperating mobile robots.
			The authors of \cite{BCR} studied the task of finding a fixed point on the line and in the grid, and initiated the study of the task of searching for an unknown line in the plane. This research was continued, e.g., in \cite{DKP,JL,La2}. In \cite{KR,SF}, the authors concentrated on game-theoretic aspects of the situation where multiple selfish pursuers compete to find a target.

			Several papers considered  treasure hunt in the plane, see surveys \cite{Gal2013,GK2010}. The main result of \cite{La} is an optimal algorithm to sweep a plane in order to locate an unknown fixed target, where locating means to get the agent originating at point $O$ to a point $P$ such that the target is in the segment $OP$. In \cite{FHGTM}, the authors considered the generalization of the search problem in the plane to the case of several searchers.  Efficient search for a fixed or a moving target in the plane, under complete ignorance of the searching agent, was studied in \cite{Pe}. In \cite{BDPP}, the authors studied treasure hunt in the plane with the help of interactive hints indicating an angle in which the treasure is located.

			Treasure hunt on the line (also called the cow-path problem \cite{KRT}) has been also generalized to the environment consisting of multiple rays originating at a single point \cite{AAD,DFG2006,LS2001,Sch2001}. The first two of these papers analyzed treasure hunt with turn costs, both in the line and in multiple rays. The authors of \cite{LS2001} consider the variant of treasure hunt in multiple rays of bounded length.

			Treasure hunt in arbitrary graphs has been studied, e.g., in \cite{ABRS,DKK2006}. The authors of \cite{ABRS,DKK2006} were mainly interested in efficient constrained exploration strategies. However, they got interesting corollaries for treasure hunt: they designed algorithms finding a treasure at distance $d$ at a cost linear in the number of edges of a ball with radius only slightly larger than $d$. The authors of \cite{ABRS} conjectured that it is impossible to find a treasure at cost nearly linear in the number of edges of a ball with radius exactly $d$. In the recent paper \cite{BDLP} this conjecture was refuted: it was shown that this is in fact possible with only logarithmic overhead on this linear cost.
			
			In \cite{FKK+2008}, the authors considered treasure hunt in several classes of graphs including trees. Treasure hunt in trees was the object of study in \cite{DCD,DCS1995,KZ2011}. In the first two of these papers, the authors restricted attention to complete $b$-ary trees, and in the third one, treasure hunt was studied in symmetric trees but admitting multiple treasures.

			In \cite{KKKS,MP}, treasure hunt in graphs was considered under the advice paradigm, where a given number of bits of advice, rather than a specified piece of knowledge, can be given to the agent, and the issue is to minimize this number of bits. For the related problem of graph exploration, the comparison of the impact of different types of maps on the efficiency of exploration was considered in \cite{DP}. The impact of different types of knowledge on the efficiency of the treasure hunt problem restricted to symmetric trees was studied in \cite{KZ2011}. (Note that in such trees, having a complete or a blind map is equivalent). To the best of our knowledge this impact has never been studied before for arbitrary trees. This is the subject of the present paper.

	\section{Complete map vs. blind map}

		\subsection{Known distance}

			We start with a proposition showing that if the distance from the root to the treasure is known then the penalty of having a blind map over a complete map cannot be bounded by any function of the radius $m$.

			\begin{proposition}\label{lem_good_bad_dist}
				There is no function $f:\mathbb{N}\rightarrow\mathbb{N}$ such that $\Penalty(\KTdb/\KTdc)$ is $O(f)$.
			\end{proposition}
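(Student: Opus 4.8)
The plan is to argue by contradiction, directly negating Definition~\ref{def}. Suppose $\Penalty(\KTdb/\KTdc)$ were $O(f)$ for some $f$. Then there would exist a constant $c>0$ and one fixed blind-map algorithm $\cA_2$ such that for every complete-map algorithm $\cA_1$, every $K_2\in\KTdb$ and every $m\geq 1$,
\[
\overhead_{\cA_2}^{K_2}(m)\ \leq\ c\cdot f(m)\cdot \max_{\substack{K_1\in\KTdc\\ K_1\subseteq K_2}}\overhead_{\cA_1}^{K_1}(m).
\]
I would fix such $c$ and $\cA_2$ and then exhibit a single bad instance (depending on $\cA_2$, $c$ and $f$) for which this inequality fails, leaving an integer parameter $k$ to be tuned at the very end.

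For the construction I would take the tree $T_k^*$ whose root $r$ has $k+1$ children: $k$ of them are leaves (at level $1$), and the last one is the first node of a simple path of length $2$ ending at a node $t$ at level $2$. I set $m=2$ and $d=2$, and let $K_2\in\KTdb$ be the blind-map knowledge of $(T_k^*,2)$, i.e.\ the set of all $(T,2)$ obtained from $T_k^*$ by reassigning port numbers. On the complete-map side I would take $\cA_1$ to read its map, recognise the unique port leading to the depth-$2$ path, and walk straight to $t$. Since $t$ is the only node at level $2$, every $K_1=\{(T,2)\}\subseteq K_2$ gives $C_{\cA_1}^{K_1}(T,2)=2$ and hence $\overhead_{\cA_1}^{K_1}(2)=1$, so the whole right-hand side above collapses to $c\cdot f(2)$.

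The heart of the argument is the matching blind-map lower bound: I claim that for some port-numbering $T\in K_2$ the deterministic agent $\cA_2$ is forced to enter all $k$ leaf-children of $r$ before descending the path, giving $C_{\cA_2}^{K_2}(T,2)\geq 2k+2$ and therefore $\overhead_{\cA_2}^{K_2}(2)\geq k+1$. Having only a blind map, $\cA_2$ cannot tell the $k+1$ root ports apart a priori, and a leaf-child is distinguishable from the path-child only after one edge is traversed (one leads to a degree-$1$ node, the other to a degree-$2$ node). I would therefore run $\cA_2$ and, by a standard adversary argument, commit the label of the path-child only when it is the last unexplored root port in $\cA_2$'s deterministic exploration order; this is consistent with a genuine port-numbering of $T_k^*$ and forces each of the $k$ leaves to be visited (cost $2$ apiece) before $t$ is reached. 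This worst-case numbering lies in $B_2(K_2)$, which is all that the definition of overhead requires.

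Finally, choosing $k>c\cdot f(2)$ yields $\overhead_{\cA_2}^{K_2}(2)\geq k+1>c\cdot f(2)=c\cdot f(2)\cdot\max_{K_1\subseteq K_2}\overhead_{\cA_1}^{K_1}(2)$, contradicting the displayed inequality; since $f$, $c$ and $\cA_2$ were arbitrary, no such $f$ can exist. I expect the only delicate point to be the blind-map lower bound, namely making fully rigorous that a deterministic blind-map agent can be pinned, on the worst port-numbering, to probe every decoy leaf before the path---this is where the indistinguishability of the anonymous root ports must be exploited carefully.
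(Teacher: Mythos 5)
Your proposal is correct and follows essentially the same route as the paper's proof: the same star-plus-pendant-path tree (the paper uses $c\cdot f(2)$ decoy leaves directly instead of your parameter $k$ chosen at the end), the same choice $m=d=2$, the same adversary argument placing the degree-$2$ child last in the blind agent's deterministic probing order to force cost at least $2k+2$, and the same complete-map algorithm of cost $2$ yielding the contradiction. The "delicate point" you flag is handled in the paper exactly as you sketch it, by selecting the port-numbered copy in which $u$ is the last child visited.
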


			\begin{proof}
				Assume for contradiction that there exists a function $f:\mathbb{N}\rightarrow\mathbb{N}$ with the property that $\Penalty(\KTdb/\KTdc)$ is $O(f)$. Hence there is a constant $c>0$ and a deterministic algorithm $\cA_2$ such that for any algorithm $\cA_1$, we have:
				$$\forall K_2\in \KTdb,\forall m\geq 1, \overhead_{\cA_2}^{K_2}(m) \leq c \cdot f(m) \cdot \max_{\substack{K_1\in \KTdc\\ K_1\subseteq K_2}}\left(\overhead_{\cA_1}^{K_1}(m)\right).$$

			\begin{figure}
				\centering
				\includegraphics[width=5cm]{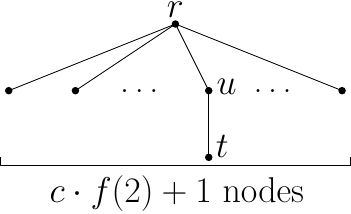}
				\caption{The tree $S$\label{fig_tree_S}}
			\end{figure}
				
				Let $S$ be a tree with a root $r$, with $c \cdot f(2)+1$ nodes at distance 1 from $r$, and with one node $t$ at distance 2 from $r$ attached to some node, called $u$ (see Figure~\ref{fig_tree_S}). Let $\mathcal{S}$ be the set of all root-preserving isomorphic copies of $S$. Let $K_2=\{(T,2)\mid T\in  \cS\}$. We have $B_2(K_2)=K_2\neq \emptyset$, since the distance is equal to 2 in every instance of $K_2$.  Observe that the node $t$ is the only possible location of the treasure in instances of $K_2$. Let $\cA_2$ be any deterministic algorithm that is given knowledge $K_2$. This algorithm visits the children of $r$ by taking port numbers at $r$ in some fixed order. There exists $W\in\cS$ such that $u$ is the last child of $r$ visited by $\cA_2$. The cost of $\cA_2$ on instance $(W, 2)$ is at least $2c \cdot f(2) + 2$ since for each child of $r$ except $u$, $\cA_2$ visits this child, comes back to $r$ and then moves at least twice to reach $t$ from $r$. Hence, we have:
				$$\overhead_{\cA_2}^{K_2}(2) = \max_{(T, d)\in B_2(K_2)}\frac{C_{\cA_2}^{K_2}(T, d)}{d}=\max_{T\in \cS}\frac{C_{\cA_2}^{K_2}(T, 2)}{2}\geq \frac{C_{\cA_2}^{K_2}(W, 2)}{2}\geq \frac{2c \cdot f(2) + 2}{2}= c \cdot f(2)+1.$$
				Any $K_1\in \KTdc$ such that $K_1\subseteq K_2$ is a singleton $\{(T, 2)\}$ with $T\in \cS$. Using any such knowledge $K_1$, the agent can find the port number to take at $r$ in order to get to $u$. There exists an algorithm $\cA_1$ that given any $K_1$ can move from $r$ to $u$ and then move to $t$ in two steps. Hence, we have:
				$$\max_{\substack{K_1\in \KnowType_1\\ K_1\subseteq K_2}}\left(\overhead_{\cA_1}^{K_1}(2)\right) = \max_{(T, d)\in B_2(K_1)}\frac{C_{\cA_1}^{K_1}(T, d)}{d} = \frac{2}{2}=1.$$
				Hence, for any algorithm $\cA_2$, there exists an algorithm $A_1$ such that:
				$$\exists K_2\in \KTdb, \overhead_{\cA_2}^{K_2}(2) \geq c \cdot f(2)+1 >c \cdot f(2) \cdot \max_{\substack{K_1\in \KnowType_1\\ K_1\subseteq K_2}}\left(\overhead_{\cA_1}^{K_1}(2)\right).$$
				This contradiction proves the proposition. Note that, according to Definition~\ref{def}, it is enough to show the contradiction for some $m\geq 1$, and we chose to give a counterexample for $m=2$. However, a similar conterexample could be given for any $m\geq 2$.
			\end{proof}

		\subsection{Unknown distance}

			In this section, we show that if the distance to the treasure is not known to the agent, then the penalty of having a blind map over having a complete map is only $O(1)$. In order to show this, we design a treasure hunt algorithm for an agent given only a blind map. The idea of the algorithm is to perform a sequence of DFS explorations at increasing levels, where the level of the next DFS is computed using the blind map. We need to preserve a compromise. On the one hand we want to increase the levels sufficiently fast in order to ``telescope'' the overall cost of iterations, but on the other hand we don't want to increase them too fast because this could cause visiting too many nodes  before reaching the treasure. It turns out that a good compromise is the following. Suppose that the agent visited all nodes at levels up to $h$. Let $k>h$ be the minimum level such that $L_{h+1}^k$ is at least $ L_1^h$. If  $L_{h+1}^k$ is at least 3 times larger than $ L_1^h$ and $k>h+1$ then we take $k-1$ as the next DFS level, otherwise we take $k$. We denote by $DFS_h$ the DFS performed up to level $h$, called the level of this DFS, by visiting children in increasing order of port numbers at each node, and returning to the root at the end.

			\vspace*{0.5cm}

			\begin{algorithm}[H]
				\caption{Treasure Hunt with a Blind Map\label{alg_bad_map}}
				\SetKwInOut{Input}{input}
				\Input{A blind map of the tree}
				$h:= 1$\;
				\While{the treasure is not found}{
					Perform $DFS_h$\;
					$k:= \min(i\in\mathbb{N} \mid i \geq h+1 \mbox{ and }  L_{h+1}^i\geq L_1^h)$\;
					\eIf{$L_{h+1}^{k}\geq 3L_{1}^{h}$ and $k\geq h+2$}{
						$h:= k-1$\;
					}{
						$h:= k$
					}
				}
			\end{algorithm}

			\vspace*{0.5cm}

			The following lemma shows that Algorithm~\ref{alg_bad_map} has a cost at most a constant times larger than the number of nodes at distance at most $d$ from the root, for any instance $(T,d)$. It will be a key piece to show that the penalty of having a blind map over having a complete map is $O(1)$, since this number of nodes over $d$ is a lower bound for any treasure hunt algorithm not knowing the distance.

			\begin{lemma}\label{lem_upper_bound_algo}
				For any $K\in \KTnb$, Algorithm~\ref{alg_bad_map} with knowledge $K$ has a cost at most $16L_1^d(T)$, for any instance $(T,d)\in K$.
			\end{lemma}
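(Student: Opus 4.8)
The plan is to track the sequence of DFS levels produced by the \textbf{while} loop and to show that the work telescopes. Write $1 = h_1 < h_2 < \cdots$ for the successive values taken by the variable $h$, and set $S_i = L_1^{h_i}$. Since the subtree of $T$ truncated at level $h_i$ has exactly $L_1^{h_i}$ edges and $DFS_{h_i}$ traverses each of them twice before returning to the root, a full $DFS_{h_i}$ costs exactly $2S_i$; moreover no moves are made between consecutive DFS's (each starts and ends at the root), nor during the purely computational update of $h$. Let $j$ be the least index with $h_j \ge d$. Every node at level $d$ is visited during $DFS_{h_j}$ and none before it, so the quantity $C_{\cA}^K(T,d)$ is at most $\sum_{i=1}^{j-1} 2S_i$ (the earlier completed DFS's) plus the cost of $DFS_{h_j}$ truncated at the moment the last level-$d$ node is seen, which is at most $2S_j$. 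It therefore suffices to prove $\sum_{i=1}^{j-1} S_i \le 4L_1^d$ and $S_j \le 4L_1^d$, which together give cost $\le 8L_1^d + 8L_1^d = 16 L_1^d$.

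The heart of the argument is the growth estimate $S_{i+2} \ge 2S_i$. I would prove it by analyzing a single update from a level $h$ with $S = L_1^h$, where $k = \min\{i \ge h+1 : L_{h+1}^i \ge S\}$. In the \emph{else} branch (new level $k$) one has $S' = S + L_{h+1}^k \ge 2S$, since $L_{h+1}^k \ge S$ by the choice of $k$; as levels only increase, $S_{i+2} \ge S_{i+1} \ge 2S_i$ whenever step $i$ takes this branch. The delicate case is the \emph{then} branch (new level $k-1$), where minimality gives $L_{h+1}^{k-1} < S$, so $S' = S + L_{h+1}^{k-1} < 2S$ and a single step need not double. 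Here the key observation is that this branch is taken only when $L_{h+1}^k \ge 3S$ and $k \ge h+2$, so the single level $k$ carries $l_k = L_{h+1}^k - L_{h+1}^{k-1} > 3S - S = 2S > S'$ nodes; consequently the next update, starting from level $k-1$, attains its minimum already at $i = k = (k-1)+1$, forcing the \emph{else} branch with new level $k$ and $S_{i+2} = L_1^k = S + L_{h+1}^k \ge 4S = 4S_i$. In both cases $S_{i+2} \ge 2S_i$. I expect this forced follow-up step — the precise reason the ``$k-1$'' branch cannot stall the geometric growth — to be the main obstacle to get right.

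Granting $S_{i+2} \ge 2S_i$, the first sum telescopes: splitting $\{1,\dots,j-1\}$ by parity and using $S_i \le \tfrac12 S_{i+2}$ bounds each of the two geometric chains by twice its largest term, so $\sum_{i=1}^{j-1} S_i \le 2S_{j-1} + 2S_{j-2} \le 4S_{j-1}$; and $S_{j-1} = L_1^{h_{j-1}} \le L_1^d$ because $h_{j-1} < d$, giving the first required bound. For the overshoot $S_j$ I would examine the final update $h_{j-1} \to h_j$ with $S = S_{j-1} \le L_1^d$. If it uses the \emph{then} branch, $S_j < 2S \le 2L_1^d$. If it uses the \emph{else} branch with $k = h_{j-1}+1$, then $h_{j-1} < d \le h_j = h_{j-1}+1$ forces $h_j = d$ and $S_j = L_1^d$. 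Otherwise it uses the \emph{else} branch with $k \ge h_{j-1}+2$, which, not being the \emph{then} branch, means $L_{h_{j-1}+1}^k < 3S$, so $S_j = S + L_{h_{j-1}+1}^k < 4S \le 4L_1^d$. In every case $S_j \le 4L_1^d$, completing the estimate.

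Finally I would dispatch two boundary situations. When $j = 1$ (i.e. $d = 1$) the first $DFS_1$ already finishes, at cost $2L_1^1 = 2L_1^d \le 16L_1^d$. And in the degenerate case where the defining set for $k$ is empty because fewer than $S$ nodes lie below level $h$, the natural reading is to jump directly to level $h(T) \ge d$, giving $S_j = L_1^{h(T)} < 2S \le 2L_1^d$, comfortably within the claimed bound. Combining the telescoped sum with the overshoot bound yields $C_{\cA}^K(T,d) \le 16 L_1^d$, as required.
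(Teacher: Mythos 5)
Your proposal is correct, and the bound it delivers matches the paper's constant $16$. The central combinatorial fact you isolate is the same as the paper's key claim: when the ``$k-1$'' branch is taken from level $h$ with $S=L_1^h$, the level $k$ alone carries $l_k>2S$ nodes, so the next update is forced to pick $k'=k=h'+1$, take the else branch, and land at $L_1^k\geq 4S$ --- exactly the paper's deduction that $b_{i-1}$ true forces $b_i$ false, $L_1^{h_i}<2L_1^{h_{i-1}}$ and $L_1^{h_{i+1}}\geq 4L_1^{h_{i-1}}$. Where you genuinely diverge is the bookkeeping built on top of this fact. The paper proves by induction a prefix-cost invariant $C_i\leq 4L_1^{h_i}$ (or $6L_1^{h_i}$ when the preceding branch was the ``then'' branch), with a three-way case analysis in the inductive step, and then converts $L_1^{h_l}$ into a multiple of $L_1^d$ by cases on $b_{l-1}$. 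You instead extract the clean two-step doubling $S_{i+2}\geq 2S_i$, telescope the prefix $\sum_{i<j}S_i\leq 4S_{j-1}\leq 4L_1^d$ by a parity split into two geometric chains, and separately bound the overshoot $S_j\leq 4L_1^d$ by a direct case analysis on the final update (your observation that the else branch with $k=h_{j-1}+1$ forces $h_j=d$ exactly is the right way to kill the one case where a single fat level could make the overshoot large; it plays the role of the paper's distinction $h_l=d$ versus $h_l>d$). Your version is more modular and avoids the induction with its dependence on the truth value of $b_{i-1}$ two steps back. One further point in your favor: you explicitly treat the degenerate situation where the set defining $k$ is empty (fewer than $L_1^h$ nodes lie below level $h$), observing that only the last update can be degenerate and that jumping to $h(T)$ then gives $S_j<2L_1^d$; the paper's pseudocode and proof silently assume $k$ is always well defined, so this is a small gap in the paper that your argument patches rather than inherits.
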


			\begin{proof}
				Let $h_i$ be the level of the $i$th DFS of Algorithm~\ref{alg_bad_map}, i.e., the value of variable $h$ at the start of the $i$th iteration of the loop. Let $k_i$ be the value of $k$ at the end of the $i$th iteration of this loop. Let $b_i$ be the truth value of the if condition at line 5 during the $i$th iteration of the loop, i.e., the truth value of the conjunction $L_{h_{i}+1}^{k_i}\geq 3L_{1}^{h_i}$ and $k_i\geq h_i+2$.

				First, we show that $h_i$ is a strictly increasing sequence. For each $i$, we consider two cases depending on the value of $b_i$. If $b_i$ is false, we have $h_{i+1}= k_i\geq h_{i}+1$. If $b_i$ is true, then we have $k_i\geq h_i+2$ and $h_{i+1}= k_i-1\geq h_i+1$. Hence, in all cases we have $h_{i+1}\geq h_i+1$.

				Let $l$ be the minimum value such that $h_l\geq d$. The treasure will be found during the $l$th iteration of the loop. We prove the following claim that intuitively shows that the size of levels explored by the algorithm grows fast enough. It will be useful to show that the total cost of all iterations of the loop is proportional to the cost of the exploration of the last level.

				\begin{claim} \label{claim}
					For any $1<i<l$, either $b_{i-1}$ is false and $L_1^{h_i}\geq 2L_1^{h_{i-1}}$, or $b_{i-1}$ is true and the following conditions hold: $L_1^{h_{i+1}}\geq 4L_1^{h_{i-1}}$, $L_{1}^{h_i}< 2L_{1}^{h_{i-1}}$, $L_1^{h_{i+1}}\geq 2L_1^{h_{i}}$, and $b_i$ is false.
				\end{claim}

				In order to prove the claim, first consider the case when $b_{i-1}$ is false. The value $h_i$ is then equal to $k_{i-1}$. We have $L_{h_{i-1}+1}^{k_{i-1}}\geq L_{1}^{h_{i-1}}$ by the choice of $k_{i-1}$. Hence, we have $L_{h_{i-1}+1}^{h_{i}}=L_{h_{i-1}+1}^{k_{i-1}}\geq L_{1}^{h_{i-1}}$ and so $L_{1}^{h_{i}} = L_{1}^{h_{i-1}}+L_{h_{i-1}+1}^{h_{i}} \geq 2L_{1}^{h_{i-1}}$.

Next, consider the case when $b_{i-1}$ is true. In this case, the value $h_i$ is equal to $k_{i-1}-1$. Since $k_{i-1}$ is the smallest value $q$ such that $q \geq h_{i-1}+1 \mbox{ and } L_{h_{i-1}+1}^{q}\geq L_{1}^{h_{i-1}}$, we have $L_{h_{i-1}+1}^{h_i}< L_{1}^{h_{i-1}}$ and so $L_{1}^{h_i}< 2L_{1}^{h_{i-1}}$. Since $b_{i-1}$ is true, we have $L_{h_{i-1}+1}^{k_{i-1}}\geq 3L_{1}^{h_{i-1}}$. We can deduce that $L_{1}^{k_{i-1}}\geq 4L_{1}^{h_{i-1}}\geq 2L_{1}^{h_{i}}$ and so $L_{h_i+1}^{k_{i-1}}\geq L_{1}^{h_{i}}$. It means that $k_{i} = k_{i-1}$. We have $k_{i}= k_{i-1}=h_{i}+1$, hence $b_{i}$ is false which implies $h_{i+1}=k_{i}=k_{i-1}$. Hence, we have $L_{1}^{h_{i+1}}\geq 4L_{1}^{h_{i-1}}$ and $L_{1}^{h_{i+1}}\geq 2L_{1}^{h_{i}}$. This proves the claim.

				Observe that $2L_1^{h_i}$ is exactly the cost of the $DFS_{h_i}$, since $L_1^{h_i}$ is equal to the number of edges whose both endpoints are at distance at most $h_i$ from the root. Let $C_i = \sum_{j=1}^i 2L_1^{h_j}$ be the total cost of the algorithm up to the $i$th iteration of the loop. 
				We prove the following claim.
				
				\begin{claim} \label{claim2}
				For any $i\geq 1$, the following inequality named $P(i)$ holds:
				$$
				C_i \leq  \left\{
				\begin{array}{l}
				4L_1^{h_i}\mbox{ if $i=1$ or $b_{i-1}$ is false,}\\
				6L_1^{h_i}\mbox{ otherwise.}
				\end{array}
				\right.
				$$
				\end{claim}

			We will show by induction on $i\geq 1$ that the inequality $P(i)$ holds for all $i\geq 1$.
				Observe that $P(1)$ is true since $C_1 = 2L_1^{h_1}$ and $P(2)$ is true since $C_2=2L_1^{h_1} + 2 L_1^{h_2}\leq 4L_1^{h_2}$. For $i\geq 2$, assume that $P(j)$ is true for all $j\leq i$. We will show that $P(i+1)$ is true. We need to consider two cases depending on the truth value of $b_i$.

				\begin{itemize}
					\item \textbf{Case 1:} $b_i$ is true\\
						In this case, $b_{i-1}$ must be false in view of Claim~\ref{claim}. Hence, we have $C_i\leq 4L_1^{h_i}$ since $P(i)$ holds. This implies:
						$$C_{i+1} = C_i + 2L_1^{h_{i+1}}\leq 4L_1^{h_i} + 2L_1^{h_{i+1}} \leq 6L_1^{h_{i+1}}.$$
					\item \textbf{Case 2:} $b_i$ is false\\
						We consider two subcases depending on the truth value of $b_{i-1}$:
						\begin{itemize}
							\item \textbf{Subcase 2.1:} $b_{i-1}$ is false\\
								In this case, we have $C_i\leq 4L_1^{h_i}$ since $P(i)$ holds, and, in view of Claim~\ref{claim}, $L_1^{h_{i+1}}\geq 2L_1^{h_{i}}$ since $b_i$ is false. This implies:
								$$C_{i+1} = C_i + 2L_1^{h_{i+1}}\leq 4L_1^{h_i} + 2L_1^{h_{i+1}} \leq 4L_1^{h_{i+1}}.$$
							\item \textbf{Subcase 2.2:} $b_{i-1}$ is true\\
								In this case, we have $L_1^{h_{i+1}}\geq 4L_1^{h_{i-1}}$ and $L_1^{h_{i+1}}\geq 2L_1^{h_{i}}$, in view of Claim~\ref{claim}. Observe that since $b_{i-1}$ is true, either $i=2$ or $b_{i-2}$ must be false in view of Claim~\ref{claim}. This means that $C_{i-1}\leq 4L_1^{h_{i-1}}$ since $P(i-1)$ holds. This implies:
								\begin{eqnarray*}
									C_{i+1}	& = &	C_{i-1} + 2L_1^{h_{i}} + 2L_1^{h_{i+1}}\\
											&\leq&	4L_1^{h_{i-1}} + L_1^{h_{i+1}} + 2L_1^{h_{i+1}} \mbox{ (since }C_{i-1}\leq 4L_1^{h_{i-1}} \mbox{ and } L_1^{h_{i+1}}\geq 2L_1^{h_{i}})\\
											&\leq&	L_1^{h_{i+1}} + 3L_1^{h_{i+1}} \mbox{ (since }L_1^{h_{i+1}}\geq 4L_1^{h_{i-1}})\\
											&=&		4L_1^{h_{i+1}}.
								\end{eqnarray*}
						\end{itemize}
				\end{itemize}

				This proves the inequality $P(i)$ by induction and ends the proof of the claim.

				It remains to show that $C_l \leq 16L_1^{d}$. Recall that $l$ is the smallest integer such that $h_l\geq d$. If $h_l=d$ then $C_l\leq 6L_1^{h_{l}}\leq 16L_1^d$. If $h_l>d$ then we have $h_{l-1} + 1 < h_{l}$ since $h_{l-1}<d$. We consider two cases depending on the truth value of $b_{l-1}$. If $b_{l-1}$ is false then $k_{l-1}=h_l\geq h_{l-1}+2$. Since the conjunction ($L_{h_{l-1}+1}^{k_{l-1}}\geq 3L_{1}^{h_{l-1}}$ and $k_{l-1}\geq h_{l-1}+2$) is false, we have $L_{h_{l-1}+1}^{h_l} < 3L_{1}^{h_{l-1}}$ and so $L_{1}^{h_l} < 4L_{1}^{h_{l-1}}$. This implies:
				$$C_{l}\leq 4L_{1}^{h_l} < 16L_{1}^{h_{l-1}} < 16L_1^d.$$
				If $b_{l-1}$ is true then we have $L_{1}^{h_l} < 2L_{1}^{h_{l-1}}$, in view of Claim~\ref{claim}. This implies
				$$C_{l}\leq 6L_{1}^{h_l} < 12L_{1}^{h_{l-1}} < 12L_1^d.$$
				Hence, in all cases, we have $C_{l} \leq 16L_1^d$.
			\end{proof}

			The next lemma gives a lower bound on the overhead of any algorithm that is given only a complete map.

			\begin{lemma}\label{lem_lower_bound}
				For any deterministic algorithm $\cA$, for any $K\in \KTnc$ and for any integer $m\geq 1$, we have:
				$$\overhead_{\cA}^{K}(m) \geq \max_{(T,d)\in B_m(K)}\frac{L_1^d(T)}{d}.$$
			\end{lemma}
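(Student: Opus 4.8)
The plan is to exploit the fact that in knowledge type $\KTnc$ the agent is handed a complete map but \emph{not} the distance, so it gains no new information while walking and its whole trajectory is one fixed walk that does not depend on $d$. First I would fix $K\in\KTnc$; by definition all its instances share a single tree $T^*$, and $B_m(K)=\{(T^*,d)\mid 1\le d\le\min(m,h(T^*))\}$. Since the complete map already determines every port number and degree the agent will ever observe, running $\cA$ on $K$ produces a deterministic sequence of positions $r=v_0,v_1,v_2,\dots$ that is \emph{the same for every} $d$, and $C_{\cA}^{K}(T^*,d)$ is then just the first time $\tau_d$ at which all level-$d$ nodes have appeared among $v_0,\dots,v_{\tau_d}$. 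This single-walk observation is what makes the costs at different levels comparable, and it is precisely where ``unknown distance'' is used.

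The naive attempt, to show $\tau_d\ge L_1^{d}(T^*)$ for each $d$, fails, and I expect this to be the main conceptual obstacle: to cover level $d$ the walk need only traverse the subtree joining the root to the level-$d$ nodes, whose edge count can be far below $L_1^{d}(T^*)$ when many nodes at smaller levels are leaves (e.g.\ a root with many leaf-children and one deeper branch). The fix is to refuse to use the same level on both sides. Let $d^*$ be a value of $d\le\min(m,h(T^*))$ maximizing $L_1^{d}(T^*)/d$, so the right-hand side equals $L_1^{d^*}(T^*)/d^*$. The key point is that the single walk must, at some moment, have finished \emph{all} of the levels $1,2,\dots,d^*$, because each such level is the worst-case treasure location for the corresponding instance $(T^*,d)$; that moment is $\max_{1\le d\le d^*}\tau_d$.

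The counting step then goes as follows. By that moment the agent has visited every one of the $L_1^{d^*}(T^*)$ non-root nodes at levels $1,\dots,d^*$, and since after $t$ moves the walk has occupied at most $t$ distinct non-root positions, we get $\max_{1\le d\le d^*}\tau_d\ge L_1^{d^*}(T^*)$. Letting $d'\le d^*$ be the level attaining this maximum, so $\tau_{d'}\ge L_1^{d^*}(T^*)$, I would finally chain
\[
\overhead_{\cA}^{K}(m)\ \ge\ \frac{\tau_{d'}}{d'}\ \ge\ \frac{L_1^{d^*}(T^*)}{d'}\ \ge\ \frac{L_1^{d^*}(T^*)}{d^*}\ =\ \max_{(T,d)\in B_m(K)}\frac{L_1^{d}(T)}{d},
\]
using $d'\le d^*$ in the third step and noting that $(T^*,d')\in B_m(K)$ since $1\le d'\le d^*\le m$, which legitimizes the first step through the definition of $\overhead_{\cA}^{K}(m)$ as a maximum over $B_m(K)$. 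The degenerate case $B_m(K)=\emptyset$ makes both sides $0$ by the conventions in the definitions. The only ingredients beyond the definitions are that a complete map yields a distance-independent walk and the elementary bound ``number of moves $\ge$ number of distinct nodes visited,'' so no earlier result is needed.
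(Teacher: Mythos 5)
Your proof is correct and follows essentially the same route as the paper: both arguments exploit that the agent's walk is independent of $d$, and both identify the level (your $d'$, the paper's $y$, namely the level of the last visited node among levels $1,\dots,d^*$) whose completion occurs last, bounding the cost at that level from below by $L_1^{d^*}(T^*)$ and dividing by $d'\le d^*$. The only cosmetic difference is that you fix the maximizer $d^*$ of $L_1^{d}/d$ upfront, whereas the paper runs the argument for every level $l\le\min(m,h(S))$ and takes the maximum at the end.
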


			\begin{proof}
				Let $K$ be any knowledge in knowledge type $\KTnc$ and let $\cA$ be any deterministic treasure hunt algorithm. There exists a tree $S$ such that $K=\{(S,d)\mid 1\leq d\leq h(S)\}$. Let $m$ be any positive integer. We have $B_m(K)=\{(S,d)\mid 1\leq d\leq \min(h(S),m)\}$. Let $l$ be any integer such that $1\leq l \leq \min(h(S),m)$ and let $U_1^l$ be the union of all levels of $S$ from one to $l$. Let $t$ be the last node of $U_1^l$ visited by $\cA$. Let $y$ be the distance between the root and $t$. In the instance $(S,y)$, the treasure can be hidden in $t$ and so $C_{\cA}^K(S, y)$ is equal to the number of moves performed by $\cA$ before reaching $t$. Since $\cA$ has visited all nodes of $U_1^l$ when it reaches $t$, $C_{\cA}^K(S, l)$ is at least the number of nodes of $U_1^l$ which is $L_1^l(S)$. Hence, we have:
				$$\overhead_{\cA}^K(m) = \max_{(T,d)\in B_m(K)}\frac{C_{\cA}^K(T,d)}{d}= \max_{1\leq d\leq \min(h(S),m)}\frac{C_{\cA}^K(S,d)}{d} \geq \frac{C_{\cA}^K(S,y)}{y} \geq \frac{L_1^l(S)}{y}\geq \frac{L_1^l(S)}{l}.$$
				Since this inequality is true for any $1\leq l \leq \min(h(S),m)$, this implies:
				$$\overhead_{\cA}^{K}(m) \geq \max_{1\leq d\leq \min(h(S),m)}\frac{L_1^d(S)}{d}=\max_{(T,d)\in B_m(K)}\frac{L_1^d(T)}{d}.$$
			\end{proof}

			The final theorem of this section shows that the penalty of having only a blind map over having only a complete map is small.

			\begin{theorem}
				$\Penalty(\KTnb/\KTnc)$ is $O(1)$.
			\end{theorem}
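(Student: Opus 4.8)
The plan is to instantiate Definition~\ref{def} with the blind-map algorithm $\cA_2 = {}$Algorithm~\ref{alg_bad_map}, the constant function $f \equiv 1$, and the constant $c = 16$, and then to sandwich both sides of the penalty inequality between a single tree-theoretic quantity using the two lemmas just proved. The observation that makes everything fit is that a knowledge $K_2 \in \KTnb$ is determined by one underlying tree $T^*$ up to port numbering: every instance of $K_2$ has the form $(T,d)$ where $T$ is obtained from $T^*$ by reassigning ports and $1 \leq d \leq h(T^*)$. Since relabeling ports changes neither the depth nor the level sizes, $h(T) = h(T^*)$ and $L_1^d(T) = L_1^d(T^*)$ for every such $T$. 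I will abbreviate $N(m) := \max_{1 \leq d \leq \min(h(T^*), m)} L_1^d(T^*)/d$, the quantity to which both bounds will reduce, treating the case $B_m(K_2) = \emptyset$ (where the overhead is $0$) as trivial.

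First I would bound the left-hand side. Fixing $K_2 \in \KTnb$ and $m \geq 1$, Lemma~\ref{lem_upper_bound_algo} gives $C_{\cA_2}^{K_2}(T,d) \leq 16\,L_1^d(T) = 16\,L_1^d(T^*)$ for every instance $(T,d) \in K_2$. Taking the maximum over $(T,d) \in B_m(K_2)$ — whose admissible values of $d$ range exactly over $\{1,\dots,\min(h(T^*),m)\}$, with $T$ irrelevant to the ratio — yields $\overhead_{\cA_2}^{K_2}(m) \leq 16\,N(m)$.

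Next I would bound the right-hand side from below by $N(m)$ for \emph{every} competing algorithm $\cA_1$. Choosing any fixed port numbering of $T^*$ produces a complete-map tree $S$, and I set $K_1 = \{(S,d) \mid 1 \leq d \leq h(S)\} \in \KTnc$. Because $S$ is a port-relabeling of $T^*$, each instance $(S,d)$ lies in $K_2$, so $K_1 \subseteq K_2$. Lemma~\ref{lem_lower_bound} then gives, for every deterministic $\cA_1$, the inequality $\overhead_{\cA_1}^{K_1}(m) \geq \max_{(T,d)\in B_m(K_1)} L_1^d(T)/d = N(m)$, using $h(S) = h(T^*)$ and $L_1^d(S) = L_1^d(T^*)$. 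A fortiori the maximum over all admissible $K_1 \subseteq K_2$ is at least $N(m)$. Combining with the upper bound gives $\overhead_{\cA_2}^{K_2}(m) \leq 16\,N(m) \leq 16 \cdot \max_{K_1 \in \KTnc,\, K_1 \subseteq K_2} \overhead_{\cA_1}^{K_1}(m)$, which is precisely the inequality of Definition~\ref{def} with $c = 16$ and $f \equiv 1$, proving $\Penalty(\KTnb/\KTnc)$ is $O(1)$.

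Since the two lemmas already carry the real work — the telescoping analysis of the increasing-level DFS on one side, and the adversarial lower bound against any complete-map algorithm on the other — the only thing left to get right is the quantifier bookkeeping of Definition~\ref{def}: a single algorithm $\cA_2$ and a single constant must work for all $K_2$ and all $m$, while the lower bound must hold against every $\cA_1$. The one point requiring care is confirming $K_1 \subseteq K_2$, i.e.\ that reinstating specific port numbers on $T^*$ yields a legitimate complete-map knowledge sitting inside the blind-map knowledge, together with the port-invariance of $h$ and $L_1^d$, so that both sides genuinely collapse to the same $N(m)$.
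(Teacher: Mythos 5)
Your proof is correct and takes essentially the same route as the paper's: both instantiate Definition~\ref{def} with Algorithm~\ref{alg_bad_map}, constant $16$, and $f\equiv 1$, then combine Lemma~\ref{lem_upper_bound_algo} (upper bound on the blind-map algorithm's cost) with Lemma~\ref{lem_lower_bound} (lower bound for any complete-map algorithm), using the port-invariance of $h$ and $L_1^d$ to collapse both sides to the same quantity $\max_{1\leq d\leq \min(h(S),m)} L_1^d(S)/d$. The only difference is cosmetic: you explicitly exhibit a witness $K_1\subseteq K_2$ by fixing a port numbering of the underlying tree, a detail the paper leaves implicit when it takes the maximum over all such $K_1$.
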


			\begin{proof}
				Let $K_2$ be any knowledge in knowledge type $\KTnb$. There exists a tree $S$ such that all trees in instances of $K_2$ are root-preserving isomorphic copies of $S$. Let $\cA_2$ be Algorithm~\ref{alg_bad_map} and let $m$ be any positive integer. We have:
				$$
				\overhead_{\cA_2}^{K_2}(m)=\max_{(T,d)\in B_m(K_2)}\frac{C_{\cA_2}^{K_2}(T,d)}{d}\leq\max_{1\leq d\leq \min(h(S),m)}\frac{16L_1^d(S)}{d} \mbox{ by Lemma \ref{lem_upper_bound_algo}.}
				$$

				In view of Lemma \ref{lem_lower_bound}, for any algorithm $\cA_1$ and for any $K_1\in \KTnc$ such that $K_1\subseteq K_2$, we have:
				$$\overhead_{\cA_1}^{K_1}(m) \geq \max_{(T, d)\in B_m(K_1)} \frac{L_1^d(T)}{d}=\max_{1\leq d\leq\min(h(S),m)}\frac{L_1^d(S)}{d}.$$
				This implies:
				$$
				16\cdot\max_{\substack{K_1\in \KTnc\\ K_1\subseteq K_2}}\left(\overhead_{\cA_1}^{K_1}(m)\right)\geq \max_{1\leq d\leq\min(h(S),m)} \frac{16L_1^d(S)}{d}\geq \overhead_{\cA_2}^{K_2}(m).
				$$
			\end{proof}

	\section{The impact of knowing the distance}

		In this section we show that the penalty of not knowing the distance to the treasure over knowing it, both when a complete map and when a blind map is given to the agent, is medium with respect to penalties considered in the previous section. More precisely, we prove that this penalty is linear in the radius $m$, and that this estimate is tight in the sense that this penalty is not any sublinear function of $m$.

		The first lemma shows that the above mentioned penalties, as well as the penalty of the least precise knowledge type $\KTnb$ over the most precise knowledge type $\KTdc$ are linear in $m$.

		\begin{lemma}\label{up}
			$\Penalty(\KTnc/\KTdc)$, $\Penalty(\KTnb/\KTdb)$ and $\Penalty(\KTnb/\KTdc)$ are $O(m)$.
		\end{lemma}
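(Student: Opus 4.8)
The plan is to handle all three penalties with a single algorithm on the less precise (distance‑oblivious) side, namely Algorithm~\ref{alg_bad_map}, and to exploit the fact that in each of the three cases the more precise type has \emph{known} distance. The payoff of known distance is that the only thing an algorithm $\cA_1$ is forced to do is visit one prescribed level, so the relevant lower bound on its cost is merely the number $l_d$ of nodes at that level, whereas the distance‑oblivious algorithm is forced to pay roughly $L_1^d$; the gap between these two quantities is exactly what the radius bound $m$ absorbs.

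First I would fix a knowledge type $\KnowType_2\in\{\KTnc,\KTnb\}$, a knowledge $K_2\in\KnowType_2$, and the associated tree $S$. I take $\cA_2$ to be Algorithm~\ref{alg_bad_map} (when $\KnowType_2=\KTnc$ it is simply run on the blind map extracted from the given complete map, which is legitimate since Algorithm~\ref{alg_bad_map} uses only the level counts $L_1^d$, computable from either map). By Lemma~\ref{lem_upper_bound_algo}, whose proof applies verbatim, for every instance $(T,d)\in K_2$ the cost is at most $16L_1^d(S)$, so
$$\overhead_{\cA_2}^{K_2}(m)\le\max_{1\le d\le\min(h(S),m)}\frac{16L_1^d(S)}{d}.$$
Next I would establish the lower bound on the denominator. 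Because $\KnowType_1\in\{\KTdc,\KTdb\}$ has known distance, every $K_1\in\KnowType_1$ with $K_1\subseteq K_2$ fixes a single distance $d$, and for \emph{any} algorithm $\cA_1$ its overhead on such $K_1$ is at least $l_d(S)/d$: to have visited all $l_d(S)$ nodes at level $d$, the agent must have traversed the $l_d(S)$ distinct parent edges of these nodes at least once each, so its cost is at least $l_d(S)$. (For $\KTdc$ the sets $K_1$ are the singletons $\{(S',d)\}$; for $\KTdb$ they fix $d$ over all port‑renumbered copies of $S$; in both cases the bound $l_d(S)$ applies.) Ranging over $d$ yields, for every $\cA_1$,
$$\max_{\substack{K_1\in\KnowType_1\\ K_1\subseteq K_2}}\overhead_{\cA_1}^{K_1}(m)\ge\max_{1\le d\le\min(h(S),m)}\frac{l_d(S)}{d}.$$

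It then remains to bridge $L_1^d$ and $l_d$, which is where the factor $m$ appears. Writing $M=\max_{1\le j\le\min(h(S),m)}l_j(S)/j$ and letting $d^\ast$ attain the maximum in the upper bound, I would bound each level size by $l_i(S)=i\cdot\bigl(l_i(S)/i\bigr)\le iM$ and sum, using $\sum_{i=1}^{d^\ast}i\le (d^\ast)^2$:
$$\frac{L_1^{d^\ast}(S)}{d^\ast}=\frac{1}{d^\ast}\sum_{i=1}^{d^\ast}l_i(S)\le\frac{M(d^\ast)^2}{d^\ast}=d^\ast M\le mM.$$
Combining the three displays gives $\overhead_{\cA_2}^{K_2}(m)\le 16\,L_1^{d^\ast}(S)/d^\ast\le 16\,m\,M\le 16\,m\max_{K_1\subseteq K_2}\overhead_{\cA_1}^{K_1}(m)$, which is precisely the assertion that each of the three penalties is $O(m)$, with $c=16$ and $f(m)=m$.

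The main point to verify is not any single computation but the legitimacy of the crude lower bound $l_d$ and its role in the comparison. I would take care that $l_d$ is valid even for the weaker type $\KTdb$, where $\cA_1$ has only a blind map (the parent‑edge counting argument does not need port numbers), and note that replacing the optimal distance‑aware cost by $l_d$ only \emph{helps}, since we need a lower bound on the denominator. The degenerate situations are immediate: if $B_m(K_2)=\emptyset$ the overhead of $\cA_2$ is $0$ and there is nothing to prove, and whenever $\min(h(S),m)\ge 1$ the distance $d=1$ is admissible, so the right‑hand maximum is genuinely positive and the argument goes through.
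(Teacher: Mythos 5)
Your proposal is correct and follows essentially the same route as the paper's proof: the same algorithm $\cA_2$ (Algorithm~\ref{alg_bad_map}, run on either map type by ignoring port numbers), the same upper bound $16L_1^d$ from Lemma~\ref{lem_upper_bound_algo}, the same lower bound $l_d(S)$ on any distance-aware algorithm, and the same factor-$m$ bridge $L_1^d/d \leq m\max_i l_i(S)/i$. Your only deviations are cosmetic: you justify the $l_d$ lower bound by counting distinct parent edges rather than the paper's ``reach one node, then the rest pairwise at distance at least $2$'' argument, and you package the telescoping as $l_i \leq iM$ summed, where the paper bounds termwise $l_i/d \leq l_i/i$ --- both yield the identical constant $16m$.
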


		\begin{proof}
			For any algorithm $\cA_1$ that is given knowledge $K_1\in\KTdc\cup\KTdb$, we consider the cost $C_{\cA_1}^{K_1}(T,d)$ for finding the treasure at distance $d$ in tree $T$,  where $(T, d)$ is any instance in $K_1$. In order to visit all nodes at distance exactly $d$, the agent must first move to one of those nodes, at a cost at least $d$. Then, it has to move to the $l_d(T)-1$ other nodes at distance $d$, at a cost at least $2(l_d(T)-1)$ since such nodes are at distance at least 2 from each other. Thus the total cost of any algorithm $\cA_1$ for finding the treasure at distance $d$ in tree $T$ is at least $2(l_d(T)-1)+d\geq l_d(T)$, for every $(T, d)\in K_1$.

			We now consider Algorithm~\ref{alg_bad_map} and denote it by $\cA_2$. Notice that, although the input of this algorithm is a blind map, we can always give a complete map as its input, and let the algorithm ignore the port numbers. In view of Lemma~\ref{lem_upper_bound_algo}, whether the knowledge $K_2$ given to $\cA_2$ belongs to $\KTnc$ or to $\KTnb$, the cost $C_{\cA_2}^{K_2}(T,d)$ is at most $16L_1^d(T)$, for every $(T, d)\in K_2$.

			Let $K_2$ be any knowledge in $\KTnc\cup\KTnb$, and let $S$ be any tree such that all trees in instances of $K_2$ are root-preserving isomorphic copies of $S$. For any integer $m\geq 1$, we have:
			\begin{eqnarray*}
				\overhead_{\cA_2}^{K_2}(m)	& \leq &	\max_{(T,d)\in B_m(K_2)}\left(\frac{16L_1^d(T)}{d}\right)\\
											& \leq &	16\max_{1\leq d\leq \min(m, h(S))}\left(\frac{l_1(S)}{d} + \frac{l_2(S)}{d} + \frac{l_3(S)}{d} + \dots + \frac{l_{d}(S)}{d}\right)\\
											& \leq &	16\max_{1\leq d\leq \min(m, h(S))}\left(\frac{l_1(S)}{1} + \frac{l_2(S)}{2} + \frac{l_3(S)}{3} + \dots + \frac{l_{d}(S)}{d}\right)\\
											& \leq &	16\max_{1\leq d\leq \min(m, h(S))}\left(\frac{C_{\cA_1}^{K_1}(S,1)}{1} + \frac{C_{\cA_1}^{K_1}(S,2)}{2} + \frac{C_{\cA_1}^{K_1}(S,3)}{3} + \dots + \frac{C_{\cA_1}^{K_1}(S,d)}{d}\right)\\
											& \leq &	16\max_{1\leq d\leq \min(m, h(S))}\left(d\max_{1\leq i\leq d}\left(\frac{C_{\cA_1}^{K_1}(S,i)}{i}\right)\right)\\
											& \leq &	16m\max_{1\leq d\leq \min(m, h(S))}\left(\frac{C_{\cA_1}^{K_1}(S,d)}{d}\right).
			\end{eqnarray*}
			Hence, for any integer $m\geq 1$ and for any $\KnowType_1\in\{\KTdc,\KTdb\}$, we have:
			$$\overhead_{\cA_2}^{K_2}(m)\leq 16m\cdot\max_{\substack{K_1\subseteq K_2\\K_1\in\KnowType_1}}\left(\overhead_{\cA_1}^{K_1}(m)\right).$$
			This implies that all three penalties mentioned in the Lemma are $O(m)$.
		\end{proof}

		The following lemma shows a lower bound on the penalties of not knowing the distance to the treasure over knowing it, for algorithms that are given a blind map or a complete map.

		\begin{lemma}\label{down}
			There is no function $f(m)$ in $o(m)$ with the property that $\Penalty(\KTnc/\KTdc)$, $\Penalty(\KTnb/\KTdb)$ or $\Penalty(\KTnb/\KTdc)$ is $O(f(m))$.
		\end{lemma}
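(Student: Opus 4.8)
The plan is to argue by contradiction, exhibiting for each radius $m$ a single tree on which an agent ignorant of the distance is forced to pay $\Omega(m)$ times more than an agent that knows it, \emph{regardless} of whether the maps involved are complete or blind. Suppose one of the three penalties were $O(f)$ for some $f\in o(m)$; by Definition~\ref{def} this yields a constant $c$ and an algorithm $\cA_2$ satisfying the penalty inequality for every $K_2$ and every $m$. For each large $m$ I would build a tree $S=S_m$ of depth $m$ and let $K_2$ be the knowledge of type $\KTnc$ or $\KTnb$ (as appropriate) associated with $S$. The goal is to show that $\overhead_{\cA_2}^{K_2}(m)=\Omega(m)$ while the best known-distance algorithm $\cA_1$ achieves $\max_{K_1\subseteq K_2}\overhead_{\cA_1}^{K_1}(m)=O(1)$; plugging this good $\cA_1$ into the penalty inequality forces $f(m)=\Omega(m)$, contradicting $f\in o(m)$ for all sufficiently large $m$.

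The construction must reconcile two opposing demands: the cumulative number of nodes down to depth $m$ should be $\Theta(m^2)$ (so that, by Lemma~\ref{lem_lower_bound}, an agent not knowing the distance has overhead $\Omega(m)$), yet every single level must be cheap to cover even by a \emph{blind} agent that knows the target depth. I would take a spine $u_0=r,u_1,\dots,u_m$ and hang off each $u_i$ a \emph{bush}: a node $b_i$ with roughly $i$ leaf children. Then level $j$ contains $u_j$, $b_{j-1}$ and the $\approx j$ leaves of $b_{j-2}$, so $l_j=\Theta(j)$ and $L_1^m=\Theta(m^2)$, giving $\max_{d\le m}L_1^d(S)/d\ge L_1^m/m=\Omega(m)$. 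Since the argument of Lemma~\ref{lem_lower_bound} uses only that the agent must eventually visit every node at every level $\le l$ (forced because it must cover each level) and that level sizes depend only on the tree shape, the bound $\overhead_{\cA_2}^{K_2}(m)\ge L_1^m/m$ holds whether $\cA_2$ is given a complete map (type $\KTnc$) or a blind map (type $\KTnb$); this establishes the $\Omega(m)$ lower bound on the left-hand side in all three cases.

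For the right-hand side I would exhibit a known-distance algorithm $\cA_1$ whose cost on every instance $(T,d')$ is $O(d')$: it walks down the spine to $u_{d'-1}$, makes a detour into the single bush $b_{d'-2}$ whose leaves lie at level $d'$, and finally visits $u_{d'}$ and $b_{d'-1}$; the detour and the descent each cost $O(d')$, so the overhead is $O(1)$ for each $d'$, whence $\max_{K_1\subseteq K_2}\overhead_{\cA_1}^{K_1}(m)=O(1)$. The crux — and the step I expect to be the main obstacle — is showing that this descent stays cheap for a \emph{blind} agent and for the \emph{worst} port numbering (the maximum over all relabelings hidden in the overhead for type $\KTdb$). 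The design feature making this work is that bushes are distinguishable from the spine by degree: I would pick the number of leaves of $b_i$ so that $b_i$ always has strictly larger degree than the spine node $u_{i+1}$. Then a blind agent descending the spine that tries a wrong port enters a bush node, immediately reads its degree, recognizes it as a bush, and backtracks after a single edge \emph{without} exploring the bush, paying only $O(1)$ per level and $O(d')$ in total, uniformly over all labelings. This is exactly what separates the bush construction from a naive tree in which the payload leaves sit directly on the descent path: there a blind known-distance agent would be forced to probe $\Theta(d'^2)$ leaves and the gap would collapse. With the $\Omega(m)$ lower bound and the $O(1)$ upper bound in hand, the contradiction with $f\in o(m)$ follows, and since the same tree serves for complete and blind maps on either side, it settles $\Penalty(\KTnc/\KTdc)$, $\Penalty(\KTnb/\KTdb)$ and $\Penalty(\KTnb/\KTdc)$ simultaneously.
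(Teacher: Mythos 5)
Your proposal is correct and follows essentially the same route as the paper's proof: the paper's tree $T_\ell$ is exactly your spine-plus-bushes caterpillar (with $i+3$ leaves per bush, guaranteeing your degree-based distinction between bush and spine), the $\Omega(\ell)$ lower bound comes from the same quadratic node count forced on any distance-ignorant agent regardless of map type, and the paper's known-distance algorithm descends the spine in at most $3$ moves per level using precisely your degree disambiguation before finishing with a depth-$2$ DFS, for cost at most $7d$. The only differences are cosmetic: the paper argues the lower bound directly via the last visited node rather than by extending Lemma~\ref{lem_lower_bound} to blind maps, and it fixes the bush sizes explicitly where you leave them as a parameter constrained by the degree condition.
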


		\begin{proof}
			For any $\ell\geq 2$, let $P_\ell$ be the path of length $\ell$ with vertices $u_0, u_1, u_2, \dots, u_\ell$. Consider the following tree $T_\ell$ with any port numbering. $T_\ell$ is the tree rooted at $u_0$ obtained from $P_\ell$ by adding a node $v_{i+1}$ attached to $u_i$ and attaching $i+3$ leaves to this node $v_{i+1}$, for each $0\leq i\leq \ell-2$ (see Figure \ref{fig_caterpillar_bis}). Clearly, we have $h(T_\ell)=\ell$.

			\begin{figure}
				\centering
				\includegraphics[width=10cm]{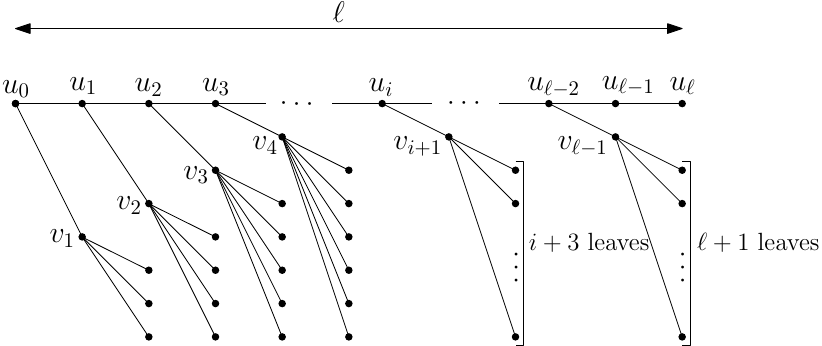}
				\caption{A tree $T_{\ell}$\label{fig_caterpillar_bis}}
			\end{figure}

			For any $\ell\geq 2$, let $\cS_\ell$ be the family of all sets $K\in \KTnb\cup\KTnc$ such that for all instances $(T, d)\in K$, $T$ is a root-preserving isomorphic copy of $T_\ell$. Let $K_2^\ell$ be any knowledge in $\cS_\ell$. Notice that $K_2^\ell$ contains instances $(T,d')$, for all $1 \leq d' \leq \ell$ and so $B_\ell(K_2^\ell)=K_2^\ell\neq \emptyset$. For any algorithm $\cA_2$ that is given knowledge $K_2^\ell$ and any $(S, d)\in K_2^\ell$, the behavior of algorithm $\cA_2$ on instance $(S, d)$ does not depend on $d$ since $\cA_2$ is given as input instances for all possible distances between the root and the position of the treasure. Let $u_S$ be the last node of $S$ visited by algorithm $\cA_2$. Let $x$ be the distance between $u_S$ and the root of $S$. The cost $C_{\cA_2}^{K_2^\ell}(S, x)$ is at least the number of nodes in $T_{\ell}$ minus 1 which is $1+\sum_{i=0}^{\ell-2}(i+5)=\frac{\ell^2+7\ell-6}{2}$. Hence, we have:
			$$\overhead_{\cA_2}^{K_2^\ell}(\ell) = \max_{(T,d)\in K_2^\ell}\frac{C_{\cA_2}^{K_2^\ell}(T, d)}{d}\geq \frac{C_{\cA_2}^{K_2^\ell}(S, x)}{x} \geq \frac{\ell^2+7\ell-6}{2x}\geq \frac{\ell^2+7\ell-6}{2\ell} \geq \frac{\ell+4}{2} \mbox{ since } \ell\geq 2.$$

			Let $K_1^\ell$ be any knowledge in $\KTdb\cup\KTdc$ such that $K_1^\ell\subseteq K_2^\ell$. We will describe an algorithm $\cA_1$ with knowledge $K_1^\ell$ such that $C_{\cA_1}^{K_1^\ell}(T, d)\leq 7d$ for any $(T,d) \in K_1^\ell$ and so $\overhead_{\cA_1}^{K_1^\ell}(\ell)\leq 7$. Without loss of generality, we can consider that $K_1^\ell\in \KTdb$. Indeed, any algorithm taking a blind map as input can be given a complete map as input and instructed to ignore the port numbers. Algorithm $\cA_1$ is given the set of instances $(T,d)$, where $d$ is fixed, and trees $T$ are all root-preserving isomorphic copies of $T_\ell$ with any port numbering. We consider two cases depending on the value of $d$. 

			\begin{itemize}
					\item \textbf{Case 1:} $d=1$\\
			In this case, the agent visits the two children of $u_0$ in three steps and we have $C_{\cA_1}^{K_1^\ell}(T, d)\leq 7d$, for any $(T,d) \in K_1^\ell$.
		
					\item \textbf{Case 2:} $d\geq 2$\\
			In this case, the agent moves to $u_{d-2}$ and then visits all descendants of $u_{d-2}$ at distance 2. We first show that for any $0\leq i\leq \ell-2$, the agent can move from $u_{i}$ to $u_{i+1}$ in at most 3 moves. If the agent is at $u_i$, it knows the port leading to the parent of $u_{i}$ and can take one of the two ports leading to a child of $u_{i}$. If this child has degree 3, then it is $u_{i+1}$ since the other child $v_{i+1}$ has a degree at least 4. If the degree of the child is more than 3, then this child is $v_{i+1}$ and the agent comes back to $u_i$ and then moves to the last remaining child $u_{i+1}$ of $u_i$. In all cases, the agent can reach $u_{i+1}$ from $u_{i}$ in at most 3 moves. Hence, it can move from $u_0$ to $u_{d-2}$ in at most $3d-6$ moves. The agent can then reach all nodes at distance $d$ from $u_0$ by performing $DFS_2$ (recall that this is DFS at depth 2)  on the subtree $S$ rooted at $u_{d-2}$. This $DFS_2$ costs at most $2(d+5)=2d+10$ since $S$ has at most $d+5$ edges. For any $(T,d)\in K_1^\ell$, the total cost $C_{\cA_1}^{K_1^\ell}(T, d)$ is at most $3d-6+2d+10=5d+4\leq 7d$, since $d\geq 2$.

			\end{itemize}
 Hence, for any $K_1^\ell$, we have:
			\begin{equation}
			\overhead_{\cA_1}^{K_1^\ell}(\ell) = \max_{(T,d)\in B_\ell(K_1^\ell)} \frac{C_{\cA_1}^{K_1^\ell}(T, d)}{d}\leq \frac{7d}{d} = 7.
			\label{eqn:overhead}
			\end{equation}
			Assume for contradiction that there exists a function $f(m)$ in $o(m)$ such that one of the penalties $\Penalty(\KTnc/\KTdc)$, $\Penalty(\KTnb/\KTdb)$ or $\Penalty(\KTnb/\KTdc)$ is $O(f(m))$. Denote $\cT^{\dist}=\{\KTdb, \KTdc\}$ and $\cT^{\nodist}= \{\KTnb, \KTnc\}$.

			Hence, there exists a constant $c$ and an algorithm $\cA_2$ such that for any algorithm $\cA_1$, we have:
			$$\exists \cT_2\in \cT^{\nodist}~\exists \cT_1\in \cT^{\dist}, \cT_1\ll \cT_2, \forall K_2\in \cT_2~\forall m\geq 1, \overhead_{\cA_2}^{K_2}(m) \leq c\cdot f(m)\cdot\max_{\substack{K_1\in \cT_1\\ K_1\subseteq K_2}}\left(\overhead_{\cA_1}^{K_1}(m)\right).$$
			For any $\ell\geq 2$ and for any algorithm $\cA_2$ there exists an algorithm $\cA_1$ such that:

$$\forall K_2^\ell\in \cS_\ell, \overhead_{\cA_2}^{K_2^\ell}(\ell) \geq \frac{\ell + 4}{2} \geq \frac{\ell + 4}{14}\cdot\max_{\substack{K_1^\ell\in \KTdb\cup\KTdc\\ K_1^\ell\subseteq K_2^\ell}}\left(\overhead_{\cA_1}^{K_1^\ell}(\ell)\right)\mbox{ since }\overhead_{\cA_1}^{K_1^\ell}(\ell)\leq 7\mbox{ from (\ref{eqn:overhead}).}$$

			Since $f(m)$ is $o(m)$, there exists an integer $y$ such that $\frac{y + 4}{14}>c\cdot f(y)$. For any  $\cT_2\in \cT^{no}$, the set $\cT_2\cap\cS_y$ contains at least one element $K_2^y$. For any $\cT_1\in \cT^{yes}$ such that $\cT_1\ll \cT_2$, there exists $K_1^y\in \cT_1$ such that $K_1^y\subseteq K_2^y$. This implies that for all $\cT_2\in \cT^{no}$ and  for all $\cT_1\in \cT^{yes}$, such that $\cT_1\ll \cT_2$ we have:
			$$\exists K_2^y\in \cT_2, \overhead_{\cA_2}^{K_2^y}(y) > c\cdot f(y)\cdot\max_{\substack{K_1^y\in \cT_1\\ K_1^y\subseteq K_2}}\left(\overhead_{\cA_1}^{K_1^y}(y)\right).$$
			This contradiction proves the lemma.
		\end{proof}

		Lemmas \ref{up} and \ref{down} imply the following theorem.

		\begin{theorem}
			\begin{sloppypar}
				All penalties $\Penalty(\KTnc/\KTdc)$, $\Penalty(\KTnb/\KTdb)$ and $\Penalty(\KTnb/\KTdc)$ are $O(m)$ and this order of magnitude cannot be improved for any of them.
			\end{sloppypar}
		\end{theorem}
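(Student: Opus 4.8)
The plan is to obtain the theorem as an immediate combination of Lemmas~\ref{up} and~\ref{down}, since between them these two lemmas already isolate both halves of the claimed statement. First I would read off the upper bound directly from Lemma~\ref{up}: that lemma asserts precisely that each of $\Penalty(\KTnc/\KTdc)$, $\Penalty(\KTnb/\KTdb)$ and $\Penalty(\KTnb/\KTdc)$ is $O(m)$, which is the first assertion of the theorem. Nothing further is needed for this direction; all the work was done in constructing Algorithm~\ref{alg_bad_map} and bounding its cost via Lemma~\ref{lem_upper_bound_algo}, together with the elementary $l_d(T)$ lower bound on the cost of any distance-aware algorithm.

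Next I would make precise the phrase ``this order of magnitude cannot be improved.'' The natural reading, consistent with Definition~\ref{def}, is that for each of the three penalties there is no function $f(m)$ in $o(m)$ for which that penalty is $O(f(m))$; informally, no sublinear bound in the radius $m$ is achievable. With this reading, the second assertion of the theorem is word-for-word the content of Lemma~\ref{down}. Hence I would simply invoke Lemma~\ref{down} to rule out every sublinear upper bound, for each of the three penalties simultaneously.

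Combining the two, each penalty is $O(m)$ but not $O(f(m))$ for any $f$ in $o(m)$, which is exactly the statement that the linear order of magnitude is tight. The only point requiring a little care is to check that the quantifier structure matches Definition~\ref{def}: Lemma~\ref{up} exhibits a single algorithm $\cA_2$ (namely Algorithm~\ref{alg_bad_map}, fed a complete or blind map with port numbers ignored) achieving the $O(m)$ rate against every distance-aware competitor, while Lemma~\ref{down} shows that no algorithm can beat the linear rate on the caterpillar family $T_\ell$. These are precisely the two directions the definition demands, so no additional construction is needed. I therefore expect no genuine obstacle at this stage: the theorem carries no new mathematical content beyond the two lemmas, and the real difficulties were already dispatched there---specifically the telescoping cost analysis of Lemma~\ref{lem_upper_bound_algo} for the upper bound, and the family $T_\ell$ with leaves at every level in Lemma~\ref{down} for the lower bound.
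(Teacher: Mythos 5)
Your proposal is correct and matches the paper exactly: the paper derives the theorem as an immediate consequence of Lemmas~\ref{up} and~\ref{down}, with the $O(m)$ upper bound coming from the former and the impossibility of any $o(m)$ bound from the latter. Your additional check of the quantifier structure against Definition~\ref{def} is sound and consistent with how the paper treats the two lemmas.
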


	\section{Conclusion} \label{ccl}

		We established penalties between all comparable pairs of knowledge states resulting from giving the agent either a complete or a blind map and giving it or not the distance to the treasure. As we have seen, these penalties differ significantly depending on the considered pair of knowledge types. As mentioned in the introduction, it turns out that these differences are caused by the presence of leaves at different levels of the tree. To see this, restrict attention to trees in which all the leaves are at the same (last) level of the tree. Call such trees {\em even}. It is easy to observe that for even trees, all considered penalties are $O(1)$. Indeed, these trees $T$ have the property that, for any treasure hunt algorithm $\cA$, regardless of the initial knowledge $K$ given to it, the cost $C_{\cA}^K(T,d)$ is at least $L_1^d(T)$ because every node at level less than $d$ has a descendant at level $d$. On the other hand, Lemma \ref{lem_upper_bound_algo} implies that there exists a treasure hunt algorithm $\cA'$ with the property that, given initial knowledge $K'$ of any of the four types, the cost $C_{\cA'}^{K'}(T,d)$ is at most $16L_1^d(T)$, for any instance $(T,d) \in K'$. This implies that all five penalties are constant.
		
		There are natural  additional knowledge types resulting from relaxing the information concerning the exact distance to the treasure. The agent could be given an interval $[d_1,d_2]$ to which the distance $d$ from the root to the treasure belongs. As previously, giving such an interval could be combined with giving the agent either a complete or a blind map. Clearly, if an interval $[d_1,d_2]$ is included in an interval $[d_1',d_2']$ then knowledge corresponding to giving the latter is less precise than that corresponding to giving the former.   It would be interesting to investigate the penalties between comparable states of knowledge resulting from such a relaxation.

		Another natural open problem left by our research is to extend the questions raised in this paper for trees to the class of arbitrary graphs.

\end{document}